\providecommand{\U}[1]{\protect\rule{.1in}{.1in}}
\theoremstyle{plain}
\newtheorem{thm}{Theorem}
\theoremstyle{definition}
\theoremstyle{proposition}
\newtheorem{prop}{Proposition}
\theoremstyle{lemma}
\theoremstyle{corollary}
\begin{document}
\title{On the proof of the Thin Sandwich Conjecture in arbitrary dimensions.}
\author{R. Avalos, F. Dahia, C. Romero and J. H. Lira.}

\begin{abstract}
In this paper we show the validity, under certain geometric conditions, of
Wheeler's thin sandwich conjecture for higher dimensional theories of gravity.
We extend the results shown by R. Bartnik and G. Fodor for the 3-dimensional
case in \cite{Bartnik} in two ways. On the one hand we show that the results
presented in \cite{Bartnik} are valid in arbitrary dimensions, and on the
other hand we show that the geometric hypotheses needed for the proofs can
always be satisfied, which constitutes in itself a new result for the
3-dimensional case. In this way, we show that on any compact $n$-dimensional
manifold, $n\geq3$, there is an open set in the space of all possible initial
data where the thin sandwich problem is well-posed.

\end{abstract}
\affiliation{Departamento de F\'{\i}sica, Universidade Federal da Para\'{\i}ba, Caixa
Postal 5008, 58059-970 Jo\~{a}o Pessoa, PB, Brazil.}
\affiliation{E-mail: rodrigo.avalos@fisica.ufpb.br; fdahia@fisica.ufpb.br;
cromero@fisica.ufpb.br; jorge.lira@mat.ufc.br}
\maketitle

\section{Introduction}

As is well known, the Cauchy problem for general relativity consists in
finding a solution of the Einstein equations in a 4-dimensional Lorentzian
manifold, which satisfies some prescribed initial conditions on a
3-dimensional Riemannian hypersurface. This can be understood as studying
whether we can propagate some initial space-like hypersurface, such that the
resulting space-time satisfies the Einstein equations. This problem has long
been studied and there are results which show that general relativity has a
well-posed Cauchy problem for initial data satisfying some constraint
equations \cite{Geroch-CB}. A detailed review on this topic can be found in
\cite{C-B1} and \cite{Ringstrom}. These contraint equations imply that we
cannot arbitrarily give the initial data set for the Cauchy problem,
motivating the study of these equations so as to determine under what
conditions it has a solution, and what part of this data can in fact be given
arbitrarily on the initial manifold. It is customary to regard this system as
a system of partial differential equations (PDE) for a Riemannian metric and for
some $(0,2)$ symmetric tensor field defined on this Riemannian hypersurface,
which in the end will play the role of the extrinsic curvature. We now know
that under some hypotheses on the topology of the space-like manifold, we can
specify a conformal metric to the physical one (that is the one which will
solve the constraints and hence will have a development in space-time) and the
trace of the second fundamental form, and then get a well-posed system for the
remaining undetermined quantities \cite{C-B1}. Another way to look at this
problem was proposed by Wheeler. His idea was to consider space-time as a
curve in what he called Superspace. Intuitively, given a 3-dimensional
manifold $M$, the Superspace $S(M)$ related to it would be the space of
\textit{geometries} that can be defined on $M$. In this way, a point in $S(M)$
is regarded as an equivalence class $(M,[g])$, represented by a pair $(M,g)$
with $g$ a Riemannian metric defined on $M$, where two Riemannian metrics are
considered equivalent if they are related to each other by a diffeomorphism
via pullback. A detailed review on this structure can be found in
\cite{Giulini1}. With this in mind we can think of space-time as a curve in
Superspace. In this context Wheeler proposed the Thin Sandwich Problem (TSP)
\cite{Wheeler}, where the idea is to give as initial data a Riemannian metric
$g$ and a tangent vector $\partial_{t}g$ to $(M,g)$, and then study whether we
can solve the constraint equations for these initial data. If we can, then
these initial data has a unique Cauchy development in space-time. This means
that there would be a unique curve in Superspace satisfying these initial
conditions and compatible with the Einstein equations. This problem has been
recently investigated by some authors \cite{Bartnik},\cite{B-O}%
,\cite{Giulini2}. In this paper, we will be particularly interested in the
results obtained by Bartnik and Fodor, who, for the 3-dimensional case, which
is the arena of classical general relativity, have established sufficient
conditions for the TSP to be well-posed \cite{Bartnik}. {More precisely, they
have shown that given some \textit{free} data $(g_{0},\dot{g}_{0},\epsilon
_{0},S_{0})$ satisfying some specific geometric conditions and for which a
solution of the constraints in their thin-sandwich formulation exists, there
is a neighbourhood of this free data set where the TSP is well-posed. Even
though this result mainly relies on both elliptic theory and an implicit
function argument, which do not generally depend on the dimension, in the
proofs they explicitly take advantage of the fact that they are working in
3-dimensions to manipulate expressions in a way which is not practical in
arbitrary dimensions. But since, just as for the evolution problem in GR, the
setting of the constraint equations in its classical formulation does not
strongly depend on the dimension, it would be expected that the results
presented in \cite{Bartnik} should extend to arbitrary dimensions ($n\geq3$).
We will show that this is actually true, and that there is in fact an
$n$-dimensional analog of the Bartnik-Fodor theorem. Also, in \cite{Bartnik},
in order to show that there are reference solutions for the constraint
equations where their main theorem applies, they produce an example using the
initial data induced by the spatially compactified Friedman-Robertson-Walker
cosmological solution with $k=-1$, where all the conditions needed for this
theorem are satisfied. Nevertheless, it is not shown that on any compact
3-dimensional manifold a reference solution exists. In this paper we show that
this last statement actually holds, that is, on any compact $n$-dimensional
manifold, $n\geq3$, there are reference solutions of the constraint equations
satisfying all the hypotheses needed to apply the implicit function argument.
In this way we will be concerned with the local well-posedness of the TPS,
where by this we mean that we will show that in a neighbourhood of free data
with specific properties the TSP has a unique solution. It should be stressed
that we do not expect this problem to be well-posed for arbitrary data. For
instance, following an argument presented by Belasco and Ohanian in
\cite{B-O}, if we choose data $(g,\dot{g},\epsilon,S)$ such that
$2\epsilon-R(g)>0$ and $\dot{g}=\pounds _{X}g$ for any smooth vector field
$X$, then no solution for the the TSP can exist on a compact (without
boundary) and connected manifold $M$. }

We would also like to draw the reader's attention to \cite{Giulini2}, where a
generalization of \cite{Bartnik} is presented which includes models for the
matter fields in a more realistic way. Even though we did not follow this
approach, it is worth to emphasize that the framework and techniques presented
in this paper could provide interesting future developments on
higher-dimensional TSP as well.

\section{Statement of the problem.}

The Cauchy problem for general relativity (GR) consists in the following.
Given an initial data set $(M,g,K)$ where $M$ is an $n$-dimensional smooth
Riemannian manifold with metric $g$ and $K$ is a symmetric second rank tensor
field, a development of this initial data set is a space-time $(V,\bar g)$,
such that there exists an embedding into $\iota:M\mapsto V$ with the following
properties:\newline i) The metric $g$ is the pullback of $\bar g$ by the
embedding $\iota$, that is $\iota^{*}\bar g=g$.\newline ii) The image by
$\iota$ of $K$ is the second fundamental form of $\iota(M)$ as a submanifold
of $(V,g)$.

In the Cauchy problem for GR we look for a development of an initial data set
such that the resulting space-time satisfies the Einstein equations. It is
customarily assumed that $V=M\times\mathbb{R}$. Since this is a consequence of
global hyperbolicity we do not regard it as a physical obstruction, and thus
we will adopt this usual setting.

At this point, to study the Cauchy problem, it is customary to consider an
$(n+1)$-dimensional space-time $(V,\bar g)$ and then make an ``$(n+1)$%
-splitting" for the metric $\bar g$. This means that we consider local
co-frames where we can write the metric $\bar g$ in a convenient way, such
that we have a ``space-time splitting". In order to do this, a vector field
$\beta$, which is constructed so as to be tangent to each hypersurface
$M\times\{t\}$, is used to define the following local frame
\begin{align*}
e_{i}  &  = \partial_{i}\; ,\; \; i=1,\cdots,n\\
e_{0}  &  = \partial_{t} - \beta
\end{align*}
and its dual coframe
\begin{align*}
\theta^{i}  &  = dx^{i} +\beta^{i}dt \; , \; \; i=1,\cdots,n\\
\theta^{0}  &  = dt
\end{align*}
Then we can write the metric $\bar g$ in the following way
\begin{align*}
\bar g = -N^{2}\theta^0\otimes\theta^0+ g_{ij}\theta^{i}\otimes\theta^{j}%
\end{align*}
where the function $N$ is a positive function referred to as the
\textit{lapse} function, while the vector field $\beta$ is called the
\textit{shift} vector.

In this adapted frame, the second fundamental form on each $M\times\{t\}$
takes the form
\begin{equation}
\label{curvext}K_{ij}=\frac{1}{2N}(\partial_{t} g_{ij}-(\nabla_{i}\beta
_{j}+\nabla_{j}\beta_{i}))
\end{equation}
where $\nabla$ denotes the induced connection in $M$ compatible with the
induced metric $g$.

As we have already noted, the possibility of finding an Einstenian development
of an initial data set depends on whether the following set of constraint
equations are satisfied by these initial data:
\begin{align}
R_{g} - \vert K\vert^{2}_{g}+(\mathrm{tr}_{g}K)^{2}  &  = 2\epsilon\\
\mathrm{div} K - \nabla\mathrm{tr}_{g}K  &  = S
\end{align}
where $(\epsilon,S)$ denote the induced energy and momentum densities on $M$,
respectively, $R_{g}$ represents the scalar curvature of $g$, $\vert\cdot
\vert_{g}$ denotes the pointwise-tensor norm in the metric $g$ and
$\mathrm{div} K$ denotes the divergence of $K$. These constraint equations are
posed on an $n$-dimensional manifold $M$ and are imposed by the $(n+1)$%
-dimensional Einstein equations (see, for instance, \cite{C-B1}). In
coordinates, these equations become:
\begin{align}
R_{g} - K^{ij}K_{ij} + (K^{i}_{i})^{2}  &  =2\epsilon\label{hamit}\\
\nabla_{j} K^{j}_{i}-\nabla_{i} K^{j}_{j}  &  =S_{i} \label{momentum}%
\end{align}
These equations are considered on a particular initial hypersurface $M\cong
M\times\{t\}$, for example, in the hypersurface defined by $t=0$. If our
initial data set $(M, g, K)$ satisfies these constraints, then, for many
sources of interest, it can be shown that there is an Einstenian development
in our space-time $V$ \cite{C-B1}.

Equations (\ref{hamit})-(\ref{momentum}) are generally posed as a set of
equations for $g$ and $K$. In the context of Wheeler's TSP these equations are
looked as equations for $N$ and $\beta$. In order to do this, we use
(\ref{curvext}) to express (\ref{hamit})-(\ref{momentum}) in terms of the
lapse and shift, and then look for solutions with some prescribed data
$(g,\dot g,\epsilon,S)$, where $\dot g=\partial_{t} g$.

In the scenario of the TSP, suppose that, given some prescribed data $(g,\dot
g,\epsilon, S)$, we have a solution $(N,\beta)$ for the constraint equations.
Furthermore, suppose this solution satisfies $2\epsilon-R_{g}\neq0$ over all
$M$. Then, introducing (\ref{curvext}) in (\ref{hamit}) we can equate the
lapse function in terms of the shift vector and the prescribed data. After
doing this we obtain%

\begin{equation}
\label{lapse}N=\sqrt{\frac{(\mathrm{tr}_{g} \gamma)^{2}-|\gamma|^{2}_{g}%
}{2\epsilon-R_{g}}}%
\end{equation}
where the tensor $\gamma$ has components
\begin{equation}
\gamma_{ij}=\frac{1}{2}\big(\dot g_{ij}-(\nabla_{i}\beta_{j}+\nabla_{j}%
\beta_{i})\big).
\end{equation}

It should be noted that in (\ref{lapse}) we have chosen the positive sign for
the square root, since this choice, which corresponds to the choice of
positive lapse, is related to the choice of a space-time foliation which
\textit{evolves to the future}, whereas the negative sign would represent a
foliation \textit{evolving to the past}. Furthermore, a few comments on the
individual signs of the numerator and denominator are in order. First of all,
note that if $M$ is connected, then the condition $2\epsilon-R_{g}\neq0$ at
each point of $M$ implies that $2\epsilon-R_{g}$ has a definite sign on  $M$.
It should be noted that later on we will impose the condition $2\epsilon
-R_{g}>0$ for a reference solution of the constraint equations, in a
neighbourhood of which we will study the TSP. This condition, which imposes an
energy constraint, forces the numerator in (\ref{lapse}) to be strictly
positive, and, furthermore, requires that $\mathrm{tr_{g}}\gamma\neq0$
$\forall$ $p\in M$. Using (\ref{curvext}) and the definition of $\gamma$, we
see that, if $M$ is connected, this implies that $\tau\doteq\mathrm{tr_{g}K}$
has a definite sign all over $M$. That is, if this initial data set has an
embedding into a space-time satisfying the Einstein equations, then the
hypersurface $M\times\{0\}\cong M$ will be an embedded hypersurface whose mean
curvature has a definite sign. This fact carries a clear physical
interpretation: the sign of the mean curvature is related to whether the
future pointing unit normals are diverging from the hypersurface or
converging, representing, respectively, an expanding or contracting space
evolving in space-time.

Now, replacing (\ref{lapse}) in (\ref{momentum}) shows that the shift vector
satisfies the following equation
\begin{equation}
\label{RTS}\nabla_{i}\Bigg(\sqrt{\frac{2\epsilon-R_{g}}{(\mathrm{tr}_{g}
\gamma)^{2}-|\gamma|^{2}_{g}}}\,\big(\gamma^{i}_{j}-\delta^{i}_{j}%
\mathrm{tr}_{g} \gamma\big)\Bigg)=S_{j},
\end{equation}
that is,
\begin{equation}
\label{div-S}\mathrm{div} \Bigg(\sqrt{\frac{2\epsilon-R_{g}}{(\mathrm{tr}_{g}
\gamma)^{2}-|\gamma|^{2}_{g}}}\,\big(\gamma-\mathrm{tr}_{g} \gamma\,
g\big)\Bigg) = S.
\end{equation}

We have a converse procedure to the one just described. That is, if, for a
given initial data set $(g,\dot g,\epsilon, S)$, (\ref{RTS}) is well-posed and
$\beta$ is a solution of (\ref{RTS}), then taking (\ref{lapse}) as a
definition, the lapse will satisfy (\ref{hamit}).

It is worth to point out that the equations (\ref{div-S}) have a variational
origin (see \cite{Bartnik} and \cite{B-O}). In particular, the first detailed
treatment of the thin-sandwich problem was made using this variational
formulation \cite{B-O}. There, some uniqueness and non-existence results were
shown, including a global uniqueness result (see also \cite{Giulini2}).

We can now state the problem we want to study here. Given a solution
$(N,\beta)$ of the constraint equations obtained from some given data $(g,
\dot g,\epsilon, S)$, can we obtain a solution of the constraint equations for
data "sufficiently near" of these given data? We will first show that, under
certain hypotheses, this can be answered affirmatively and then that these
hypotheses can always be satisfied by some reference solution on a any compact
$n$-dimensional manifold $\forall$ $n\geq3$. Note that proving that for any
set of initial data sufficiently near to $(g,\dot g,\epsilon, S)$ there is a
unique solution of the constraint equations, also proves that, if the
associated evolution problem is well-posed, then for these data there exists a
unique Cauchy development in space-time, and this, in turn, would prove a
restricted form of Wheeler's thin sandwich conjecture.

Before going further, it would be appropriate to remark that when we say that
the quantity $\dot g$ is a given datum, we mean that some symmetric $(0,2)$
tensor field on $M$ is given, and that with this tensor field we construct $K$
from (\ref{curvext}), taking this tensor field as $\partial_{t}g_{ij}$. Then
if we have a solution for the Cauchy problem, this tensor field will coincide
with $\partial_{t}g_{ij}$ on $M\times\{0\}$.

\section{Main Results}

As we have stated above, we need to study whether any initial data set
$(g,\dot g,\epsilon, S)$ sufficiently near to a reference solution of the
constraint equations also satisfies the constraint equations. In order to do
this, we can concentrate ourselves to answer this question just for the set of
equations (\ref{RTS}). In order to proceed, we will assume $M$ to be compact
(without boundary) and write this set of non-linear PDE for the shift vector
in the following way. Let
\[
H_{s} (T^{p}_{q} (M)), \quad s>\frac{n}{2},\,\, s>2,
\]
be the space of $(p,q)$-tensor fields in $M$ with local components in the Sobolev
space $H_{s}(\Omega)$, where $\Omega$ is an open subset of $\mathbb{R}^{n}$.
Denote
\[
\mathcal{E}_{1}\doteq H_{s+3}(T^{0}_{2}M)\times H_{s+1}(T^{0}_{2}M)\times
H_{s+1}(M)\times H_{s}(T^{0}_{1}M)
\]
which is a Banach space with the norm $\Vert\cdot\Vert_{\mathcal{E}_{1}%
}:\mathcal{E}_{1} \to\mathbb{R}$ given by
\begin{align*}
|| (g,\dot g,\epsilon, S)||_{\mathcal{E}_{1}} = \Vert g\Vert_{H_{s+3}}%
+\Vert\dot g \Vert_{H_{s+1}}+\Vert\epsilon\Vert_{H_{s+1}}+\Vert S\Vert_{H_{s}}%
\end{align*}
and let
\[
\mathcal{E}_{2}\doteq H_{s+2}(T^{1}_{0}M) \,\,\, \mbox{ and } \,\,\,
\mathcal{F}\doteq H_{s}(T^{0}_{1}M).
\]
Now suppose that for given data $\psi_{0}\doteq(g_{0},\dot g_{0},\epsilon
_{0},S_{0})\in\mathcal{E}_{1}$ we have a solution $\beta_{0} \in
\mathcal{E}_{2}$. Then, the continuity of all the maps involved guarantees
that (\ref{RTS}) is well-defined in a neighborhood $\mathcal{U}$ of $(\psi
_{0},\beta_{0})$ in $\mathcal{E}_{1}\times\mathcal{E}_{2}$. With this in mind,
we define the map
\begin{align*}
\Phi:\mathcal{U}\subset\mathcal{E}_{1}\times\mathcal{E}_{2}\to\mathcal{F}%
\end{align*}
given by
\begin{equation}
\label{Phi}\Phi(\psi, \beta)\doteq\mathrm{div} \Bigg(\sqrt{\frac
{2\epsilon-R_{g}}{(\mathrm{tr}_{g} \gamma)^{2}-|\gamma|^{2}_{g}}}%
\,\big(\gamma-\mathrm{tr}_{g} \gamma\, g\big)\Bigg) - S
\end{equation}
where we have denoted $\psi= (g,\dot g,\epsilon, S)$, and we are using $\beta$
to denote the shift. Then (\ref{RTS}) can be written as
\begin{equation}
\label{RTH2}\Phi(\psi,\beta)=0.
\end{equation}

Now our problem reduces to the following: we want to see if there are open
sets $\mathcal{V}\subset\mathcal{E}_{1}$, $\mathcal{W}\subset\mathcal{E}_{2}$,
with $\psi_{0}\in\mathcal{V}$ and $\beta_{0}\in\mathcal{W}$, and a unique map
\[
g:\mathcal{V}\to\mathcal{W}
\]
such that
\[
\Phi(\psi,g(\psi))=0 \,\,\, \mbox{ for all }\,\,\, \psi\in\mathcal{V}.
\]
Notice that, in this case, $\beta=g(\psi)\in\mathcal{W}$ would be the solution
to our problem. In order to address this issue, we intend to use the Implicit
Function Theorem. Hence, we need to show that
\begin{equation}
L = \frac{\delta\Phi}{\delta\beta}\bigg|_{(\psi_{0},\beta_{0})}:\mathcal{E}%
_{2}\to\mathcal{F}%
\end{equation}
is an isomorphism. This is precisely the procedure followed in \cite{Bartnik}
in the 3-dimensional case. We will extend their results for arbitrary
dimensions ($n\geq3$). Using (\ref{Phi}), we compute%

\begin{equation}
\label{LRTS}L \mathcal{Y} =\frac{\delta\Phi}{\delta\beta}\bigg|_{(\psi,\beta
)}= \mathrm{div}\Bigg(\frac{1}{N}\bigg(\mathrm{div}\mathcal{Y} g - {}%
^{S}\nabla\mathcal{Y} - \frac{1}{2\epsilon-R_{g}} \langle\pi, \nabla
\mathcal{Y}\rangle\pi\bigg)\Bigg)
\end{equation}
where $\pi$ is the tensor
\begin{equation}
\label{pi}\pi\doteq\frac{1}{N} (\gamma- \mathrm{tr}_{g} \gamma\, g) = K -
\mathrm{tr}_{g} K g,
\end{equation}
which represents the conjugate momentum to $g$ in the Hamiltonian picture of
GR, and
\[
{}^{S} \nabla_{i} \mathcal{Y}_{j} = \frac{1}{2}\big(\nabla_{i} \mathcal{Y}%
_{j}+\nabla_{j} \mathcal{Y}_{i}\big).
\]
We will study the properties of the linearized operator $L$. First of all, it
is clear that $L$ is a linear second order operator. We now have the following proposition.

\begin{prop}
If $\pi$ is a definite operator all over $M$, then the linear operator $L$ is elliptic.
\end{prop}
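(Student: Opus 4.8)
The plan is to compute the principal symbol of the second-order operator $L$ and show it is an isomorphism on the relevant fibres whenever $\pi$ is a definite operator. Starting from the expression \eqref{LRTS}, I would read off the highest-order (second-derivative) terms in $\mathcal{Y}$: since $\mathrm{div}$ contributes one derivative and the bracket contains $\mathrm{div}\mathcal{Y}$, ${}^{S}\nabla\mathcal{Y}$ and $\langle\pi,\nabla\mathcal{Y}\rangle\pi$ (each one derivative), the principal part acts as $\mathcal{Y}\mapsto \nabla\cdot\big(\tfrac1N(\nabla\cdot\mathcal{Y}\,g - {}^{S}\nabla\mathcal{Y} - \tfrac{1}{2\epsilon-R_g}\langle\pi,\nabla\mathcal{Y}\rangle\pi)\big)$ with $N$ frozen. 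Replacing $\nabla$ by a covector $\xi$ and $\mathcal{Y}$ by a vector $\zeta$, the principal symbol $\sigma_\xi(L):T_pM\to T_p^*M$ is, up to the positive factor $1/N$,
\begin{equation*}
\sigma_\xi(L)\zeta \;=\; \big(\langle\xi,\zeta\rangle\,\xi \;-\; \tfrac12(|\xi|^2\zeta + \langle\xi,\zeta\rangle\xi) \;-\; \tfrac{1}{2\epsilon-R_g}\langle\pi,\xi\odot\zeta\rangle\,\pi(\xi,\cdot)\big),
\end{equation*}
where I am using that the symbol of ${}^{S}\nabla\mathcal{Y}$ is the symmetrized product $\xi\odot\zeta$. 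Simplifying the first three terms gives $\tfrac12(\langle\xi,\zeta\rangle\xi - |\xi|^2\zeta)$, so I must show the quadratic form $\xi\mapsto\langle\sigma_\xi(L)\zeta,\zeta\rangle$ is definite for $\xi\neq0$.

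Next I would pair the symbol with $\zeta$ to obtain the scalar
\begin{equation*}
Q(\xi,\zeta) \;=\; \tfrac12\big(\langle\xi,\zeta\rangle^2 - |\xi|^2|\zeta|^2\big) \;-\; \tfrac{1}{2\epsilon-R_g}\,\langle\pi(\xi,\cdot),\zeta\rangle^2 .
\end{equation*}
By Cauchy--Schwarz the first bracket is $\le 0$ and vanishes only when $\zeta$ is parallel to $\xi^\sharp$. The second term has the sign of $-(2\epsilon-R_g)^{-1}$ times a square. The key point is the hypothesis that $\pi$, viewed as an endomorphism via $g$, is definite: then for $\zeta\parallel\xi^\sharp$, say $\zeta=\xi^\sharp$, we have $\langle\pi(\xi,\cdot),\zeta\rangle = \pi(\xi^\sharp,\xi^\sharp)\neq0$, so the second term is strictly nonzero precisely in the degenerate direction of the first term. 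Since $2\epsilon-R_g$ has a definite sign (as remarked in the text after \eqref{lapse}), one checks that $Q(\xi,\zeta)$ never vanishes for $(\xi,\zeta)\neq(0,0)$: if $\zeta$ is not parallel to $\xi^\sharp$ the first term already gives strict negativity (and the second term, being a square times a fixed-sign coefficient, cannot cancel it when that coefficient is negative, i.e.\ when $2\epsilon-R_g>0$); if $\zeta$ is parallel to $\xi^\sharp$ the first term is zero and the second is strictly nonzero. Hence $\sigma_\xi(L)$ is injective, and being an endomorphism of the finite-dimensional space $T_pM$ (after the musical identification $T_p^*M\cong T_pM$), it is an isomorphism, which is the definition of ellipticity.

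I expect the main obstacle to be bookkeeping rather than conceptual: correctly identifying the principal symbol of the composite $\mathrm{div}\circ(\cdots)$ — in particular making sure that $\mathrm{div}\,\mathcal{Y}$ contributes $\langle\xi,\zeta\rangle$, that ${}^S\nabla\mathcal{Y}$ contributes the symmetrized product, and that the outer divergence contributes a second $\xi$ — and then doing the algebra that collapses the first three terms to $\tfrac12(\langle\xi,\zeta\rangle\xi-|\xi|^2\zeta)$. A secondary subtlety is the case analysis on the sign of $2\epsilon-R_g$: the cleanest statement of ellipticity only needs $\sigma_\xi(L)$ to be an isomorphism, not sign-definite, so when $2\epsilon-R_g>0$ the two contributions to $Q$ have the same (nonpositive) sign and the argument is immediate, while for the opposite sign one still gets injectivity because the two terms degenerate on complementary sets of directions, using definiteness of $\pi$ to rule out simultaneous vanishing. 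I would present the $2\epsilon - R_g > 0$ case as the main line, since that is the regime imposed later for the reference solution, and note that definiteness of $\pi$ is exactly what is needed to handle the borderline directions in general.
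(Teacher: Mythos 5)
Your approach differs from the paper's at the decisive step: after obtaining the (correct) principal symbol, you pair $\sigma_\xi(L)\zeta$ with $\zeta$ itself to form the quadratic form $Q(\xi,\zeta)=\langle\sigma_\xi(L)\zeta,\zeta\rangle$ and aim to show $Q\neq 0$, whereas the paper pairs the symbol with $\xi$. The paper's pairing annihilates the Cauchy--Schwarz piece identically, leaving $\langle\sigma_\xi(L)\zeta,\xi\rangle=-\tfrac{1}{N(2\epsilon-R_g)}\,\pi(\xi,\xi)\,\pi(\xi,\zeta)$; if $\sigma_\xi(L)\zeta=0$ this forces $\pi(\xi,\zeta)=0$ by definiteness, which fed back into the symbol equation gives $\zeta\parallel\xi^\sharp$, and definiteness once more gives $\zeta=0$. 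No sign information on $2\epsilon-R_g$ is ever used.

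Your argument is correct, and in fact stronger than needed (it gives negative-definiteness of the symbol, i.e.\ strong ellipticity), precisely when $2\epsilon-R_g>0$: then both pieces of $Q$ are $\le0$, and $\pi$-definiteness forbids their simultaneous vanishing. But the proposition's hypotheses do not include $2\epsilon-R_g>0$, and the opposite sign is not vacuous: in $n=3$, taking $\pi$ with eigenvalues $(-1,-\varepsilon,-\varepsilon)$ forces $2\epsilon-R_g=(\operatorname{tr}K)^2-|K|^2=\tfrac{1}{n-1}(\operatorname{tr}\pi)^2-|\pi|^2=-\tfrac12+2\varepsilon<0$ for small $\varepsilon>0$. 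In that regime the Cauchy--Schwarz piece of $Q$ is $\le0$ while the $\pi$-piece is $\ge0$, so they can \emph{cancel} on a $\zeta$ where neither vanishes. Concretely, with the $\pi$ above, $\xi=e_1$ and $\zeta=e_1+t\,e_2$ with $t^2=-2/(2\epsilon-R_g)$, one has $Q(\xi,\zeta)=0$ while $\sigma_\xi(L)\zeta=-\tfrac1a e_1-\tfrac t2 e_2\neq0$. So your sentence ``definiteness of $\pi$ rules out simultaneous vanishing'' addresses the wrong failure mode --- the obstruction is cancellation, not common zeros --- and $Q=0$ carries no injectivity information here. The repair is exactly the paper's device: pair the symbol against $\xi$ (or argue directly from the vanishing of the full vector $\sigma_\xi(L)\zeta$), which is sign-agnostic in $2\epsilon-R_g$ and needs only definiteness of $\pi$.
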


\begin{proof}
The first thing we need to do is to compute the symbol of the linear operator $L$.  We easily verify that the symbol of $L$ is given by
\begin{align}\label{symbol}
(\sigma(L)(\xi)\cdot\mathcal{Y})^j = \frac{1}{N}\bigg( \frac{1}{2}\xi^j\xi_k -\frac{1}{2}|\xi|^2_g\delta^j_k - \frac{1}{2\epsilon -R_g} \pi^{ij}\xi_i \pi_{k}^{\ell} \xi_\ell\bigg)\mathcal{Y}^k ,
\end{align}
for all $\xi \in \Gamma(T^*M)$ and $\mathcal{Y}\in \Gamma(TM)$. Hence
\begin{align*}
\langle \sigma(L)(\xi)\cdot \mathcal{Y}, \xi\rangle &= (\sigma(L)(\xi)\cdot\mathcal{Y})^j \xi_j= \frac{1}{N}\bigg( \frac{1}{2}|\xi|^2_g \langle \xi, \mathcal{Y}\rangle -\frac{1}{2}|\xi|^2_g \langle \xi, \mathcal{Y}\rangle - \frac{1}{2\epsilon -R_g} \pi(\xi, \xi) \pi(\xi, \mathcal{Y}) \bigg)\\
\,\, &= - \frac{1}{2\epsilon -R_g} \pi(\xi, \xi) \pi(\xi, \mathcal{Y})
\end{align*}
Suppose $\mathcal{Y}\in T_pM$ such that $\sigma(L)(\xi)\cdot\mathcal{Y}=0$ for some $\xi\neq 0$. Then
\begin{equation}
\label{op2}
\pi(\xi, \xi) \pi(\xi, \mathcal{Y}) =0
\end{equation}
for some $\xi\neq 0$. Since by assumption $\pi$ is definite and hence non-degenerate, this implies that $\pi(\xi,\mathcal{Y})=0$, $\xi\neq 0$. Using this information in (\ref{symbol}), we get that if $\mathcal{Y}$ is in the null space of $\sigma(L)(\xi)$, then $\mathcal{Y}=\langle\frac{\xi}{|\xi|^2_g},\mathcal{Y}\rangle \xi$. This two conditions, together with the fact that $\pi$ is non-degenerate, imply that $\mathcal{Y}=0$. Thus $L$ is elliptic.
\end{proof}

It is interesting to note that the condition on $\pi$ being a definite
operator has one particular consequence with a clear physical interpretation.
Note that $\pi$ being definite imposes a condition on $\mathrm{tr_{g}}K$,
since, using (\ref{pi}), we get that $\mathrm{tr_{g}}K=\frac{1}{1-n}%
\mathrm{tr_{g}\pi}$. Also, note that $\pi$ defines an operator $\pi^{\sharp}$
on vector fields, given in components by $\pi^{\sharp}(X)^{i}\doteq\pi_{j}%
^{i}X^{j}$. Note that the symmetry of $\pi$ shows that $\pi^{\sharp}$ defines
a self adjoint operator (with respect to $g$) on each tangent space. That is,
$\langle v,\pi^{\sharp}(w)\rangle=\langle\pi^{\sharp}(v),w\rangle$ for all
$v,w\in T_{p}M$ and $p\in M$. This means that, at each point, there is a
$g$-orthonormal basis diagonalizing $\pi$. Using such basis in order to
compute $\mathrm{tr_{g}}\pi$, we see that the trace is the sum of the
eigenvalues of $\pi$, and thus, that if $\pi$ is definite, the trace must be either strictly
positive or strictly negative. This implies that if $\pi$ is definite on $M$, then, if $M$ is
connected, $\mathrm{tr_{g}}K$ must have constant sign on $M$, and cannot be
zero. Now, if a given initial data set $(g,K)$ satisfying this condition on
the trace of $K$ has a development into a space-time $V$, then the embedded
hypersurface $M\times\{0\}\cong M$ has mean curvature with a definite sign all
over the hypersurface. This, again, can be interpreted as telling us that the
whole hypersurface is either expanding or contracting in its space-time
evolution (at least for short times).

From now on, we will suppose that $\pi$ gives a definite operator at every
point of $M$ so that the last proposition holds. Having in mind that our aim
is to establish sufficient conditions so that $L$ is an isomorphism, the
ellipticity condition just established shows that what we need to do is to
show the injectivity of both $L$ and its formal adjoint $L^{*}$. A
straightforward computation, using integration by parts, gives us that $L$ is
(formally) self-adjoint. This means that for all smooth vector fields
$\mathcal{Y}, \mathcal{Z}$ the following holds
\begin{align*}
\int_{M} \langle L\mathcal{Y}, \mathcal{Z}\rangle\, \mathrm{d}M_{g_{0}}%
=\int_{M} \langle\mathcal{Y}, L\mathcal{Z}\rangle\, \mathrm{d}M_{g_{0}},
\end{align*}
where $\mathrm{d}M_{g_{0}}$ is the Riemannian volume element in $M$ induced by
the metric $g_{0}$. Thus, if $\pi$ is a definite operator on $M$, then $L$ is
a (formally) self-adjoint elliptic operator, and what we need to establish is
its injectivity, which is the content of the following proposition.

\begin{prop}
Consider a reference solution $(\psi,\beta)$ for the TSP on a compact
n-dimensional manifold $M$ satisfying that: \textrm{i)} $\pi$ is a definite
operator on $M$; \textrm{ii)} $2\epsilon-R_{g}>0$ on $M$; \textrm{iii)} given
a function $\mu$, the equation
\begin{equation}
\label{conf-kill}{}^{S}\nabla\mathcal{Y}=\mu K
\end{equation}
has only the solution $\mathcal{Y}=0$, $\mu=0$. Then $L$ is injective.
\end{prop}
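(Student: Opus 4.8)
The plan is to show that $L\mathcal Y = 0$ forces $\mathcal Y = 0$ by deriving an integral identity, exploiting both ellipticity (already established) and the three hypotheses. Since $L$ is formally self-adjoint and elliptic, a solution of $L\mathcal Y = 0$ lies in the kernel, and I would first pair the equation against $\mathcal Y$ itself and integrate over the compact manifold $M$. Using the expression \eqref{LRTS} for $L$ together with the divergence theorem (no boundary terms since $M$ is closed), the integral $\int_M \langle L\mathcal Y,\mathcal Y\rangle\,\mathrm dM_{g}$ becomes
\begin{equation*}
-\int_M \frac{1}{N}\bigg(\langle \mathrm{div}\,\mathcal Y\, g - {}^{S}\nabla\mathcal Y,\nabla\mathcal Y\rangle - \frac{1}{2\epsilon-R_g}\langle\pi,\nabla\mathcal Y\rangle^2\bigg)\,\mathrm dM_g .
\end{equation*}
Because ${}^{S}\nabla\mathcal Y$ is symmetric, $\langle {}^{S}\nabla\mathcal Y,\nabla\mathcal Y\rangle = |{}^{S}\nabla\mathcal Y|^2$ and $\langle \mathrm{div}\,\mathcal Y\, g,\nabla\mathcal Y\rangle = (\mathrm{div}\,\mathcal Y)^2 = (\mathrm{tr}_g\,{}^{S}\nabla\mathcal Y)^2$, so the integrand collapses to
\begin{equation*}
\frac{1}{N}\bigg(|{}^{S}\nabla\mathcal Y|^2 - (\mathrm{tr}_g\,{}^{S}\nabla\mathcal Y)^2 + \frac{1}{2\epsilon-R_g}\langle\pi,{}^{S}\nabla\mathcal Y\rangle^2\bigg) = 0 .
\end{equation*}

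The key observation is then a pointwise algebraic one. For a symmetric $(0,2)$-tensor $T = {}^{S}\nabla\mathcal Y$, consider its trace-free part $\mathring T = T - \tfrac1n (\mathrm{tr}_g T)g$; one computes $|T|^2 - (\mathrm{tr}_g T)^2 = |\mathring T|^2 - \tfrac{n-1}{n}(\mathrm{tr}_g T)^2$, which is not sign-definite, so the first two terms alone do not suffice. Here is where hypothesis \textrm{i)} enters: writing $\pi = K - (\mathrm{tr}_g K)g$ and decomposing $T$ along $\pi$ and orthogonally, I would show that the full quadratic form $Q(T) \doteq |T|^2 - (\mathrm{tr}_g T)^2 + (2\epsilon-R_g)^{-1}\langle\pi,T\rangle^2$ is nonnegative on the space of symmetric tensors, using definiteness of $\pi$ (its eigenvalues all have the same sign) together with hypothesis \textrm{ii)} $2\epsilon - R_g > 0$. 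The natural route is to complete the square in the $\langle\pi,T\rangle$ direction and verify the remaining form is positive semidefinite, possibly invoking the Lagrangian-type constraint relating $N^2$, $\pi$ and $2\epsilon-R_g$ from \eqref{lapse}–\eqref{pi} so that the coefficient balances exactly. Granting this, the integral identity forces $Q({}^{S}\nabla\mathcal Y) \equiv 0$ on $M$.

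The final step is to characterize the equality case of $Q$. I expect that $Q(T)=0$ precisely when $T$ is a pure-trace multiple of $\pi$ rescaled — more precisely, $Q({}^{S}\nabla\mathcal Y)=0$ should force ${}^{S}\nabla\mathcal Y = \mu\,\pi$ for some function $\mu$ (the analysis of when the completed square vanishes should pin down both the orthogonal complement and the residual trace part, leaving exactly the $\pi$-direction, and since $\pi = K - \mathrm{tr}_g K\,g$ this is equivalent to ${}^{S}\nabla\mathcal Y = \tilde\mu K$ for a rescaled function $\tilde\mu$, absorbing the pure-trace ambiguity via $\mathrm{tr}_g$). Hypothesis \textrm{iii)} then says the only such $\mathcal Y$ is $\mathcal Y = 0$ (with $\mu=0$), completing the proof. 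The main obstacle is the pointwise linear-algebra claim that $Q\ge 0$ with the correct characterization of its kernel: the sign works out only because of the precise interplay between $2\epsilon - R_g$, $N$ and $\pi$ dictated by the Hamiltonian constraint, so the care is in tracking that identity rather than in any soft functional-analytic argument; everything else (integration by parts, ellipticity, elliptic regularity to justify that the kernel consists of smooth fields) is routine.
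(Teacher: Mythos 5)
Your overall strategy is the same as the paper's: test $L\mathcal Y = 0$ against $\mathcal Y$, integrate by parts (the paper first localizes with a cutoff $\eta$ on $\Omega' \subset \Omega$ and then drops the boundary terms, but on a closed $M$ pairing directly against $\mathcal Y$ is fine), and reduce to showing that the resulting pointwise quadratic form $I = |{}^{S}\nabla\mathcal Y|^2 - (\mathrm{div}\,\mathcal Y)^2 + (2\epsilon - R_g)^{-1}\langle\pi, {}^{S}\nabla\mathcal Y\rangle^2$ is nonnegative with a one-dimensional kernel pinned down by hypothesis (iii). You also correctly identify that the nonnegativity hinges on the Hamiltonian constraint $(\mathrm{tr}\,K)^2 - |K|^2 = 2\epsilon - R_g$; that part of your intuition is right.

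The gap is in the characterization of the equality locus, which is exactly what must match hypothesis (iii). You assert that $I(T) = 0$ forces $T = \mu\,\pi$, and then claim this is ``equivalent to $T = \tilde\mu K$ by absorbing the pure-trace ambiguity.'' That absorption does not exist: $\mu\pi = \mu K - \mu(\mathrm{tr}\,K)g$ has a pure-trace component that $\tilde\mu K$ does not, so $\mu\pi$ and $\tilde\mu K$ span genuinely different lines in the space of symmetric two-tensors whenever $\mathring K \neq 0$, which is the generic case under hypothesis (ii) (if $K$ were pure trace one could check separately). In fact a direct computation with the constraint shows $I(K) = |K|^2 - (\mathrm{tr}\,K)^2 + (2\epsilon - R_g)^{-1}\langle\pi, K\rangle^2 = -(2\epsilon - R_g) + (2\epsilon - R_g) = 0$, while $I(\pi) \neq 0$ in general; so the kernel of $I$ is the span of $K$, not of $\pi$. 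The paper gets this right by completing the square around $K$ rather than around $\pi$: setting $\lambda = |K|^{-2}\langle\nabla\mathcal Y, K\rangle$ and using $(\mathrm{tr}\,K)^2 - |K|^2 = 2\epsilon - R_g$, one obtains
\begin{equation*}
(2\epsilon - R_g)\,I = (2\epsilon - R_g)\,\big|{}^{S}\nabla\mathcal Y - \lambda K\big|_g^2 + \Big(|K|_g\,\mathrm{div}\,\mathcal Y - \tfrac{\mathrm{tr}\,K}{|K|_g}\langle{}^{S}\nabla\mathcal Y, K\rangle\Big)^2 ,
\end{equation*}
so that $I \geq 0$ (by (ii)) with equality precisely when ${}^{S}\nabla\mathcal Y = \lambda K$ -- the second square then vanishes automatically. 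This is what feeds hypothesis (iii) directly. A secondary point: in this injectivity step the definiteness of $\pi$ is used only to guarantee $|K|_g \neq 0$ so that $\lambda$ is defined (its real work is in the ellipticity proposition); it is not what makes $I$ nonnegative, contrary to what your sketch suggests. Your plan is salvageable, but the pointwise completion of the square is the proof and must be carried out around $K$, not around $\pi$.
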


\begin{proof}
Recall that
\begin{equation}
L\mathcal{Y} = {\rm div} \bigg(\frac{1}{N} \bigg({\rm div}\mathcal{Y}\, g - \frac{1}{2}\pounds_{\mathcal{Y}} g - \frac{1}{2\varepsilon-R} \langle \pi, \nabla \mathcal{Y}\rangle \pi\bigg)\bigg)
\end{equation}
Let $\Omega$ be a relatively compact open subset in $M$ and let $\eta\in C^\infty_0(\Omega)$ with $\eta \equiv 1$ in $\Omega'\subset \Omega$. Denoting $\mathcal{Z} =\eta \mathcal{Y}$, one obtains
\begin{align*}
\langle  L\mathcal{Y}, \mathcal{Z}\rangle &= Z^j \nabla_i \bigg(\frac{1}{N} \bigg({\rm div}\mathcal{Y}\, \delta^i_j - \frac{1}{2}(\nabla_j \mathcal{Y}^i+\nabla^i \mathcal{Y}_j)- \frac{1}{2\varepsilon-R} \langle \pi, \nabla \mathcal{Y}\rangle \pi^i_j\bigg)\bigg)\\
&= \nabla_i \bigg(\frac{1}{N} \bigg({\rm div}\mathcal{Y}\, \delta^i_j - \frac{1}{2}(\nabla_j \mathcal{Y}^i+\nabla^i \mathcal{Y}_j)- \frac{1}{2\varepsilon-R} \langle \pi, \nabla \mathcal{Y}\rangle \pi^i_j\bigg) Z^j\bigg)\\
&- \frac{1}{N} \bigg({\rm div}\mathcal{Y}\, \delta^i_j - \frac{1}{2}(\nabla_j \mathcal{Y}^i+\nabla^i \mathcal{Y}_j)- \frac{1}{2\varepsilon-R} \langle \pi, \nabla \mathcal{Y}\rangle \pi^i_j\bigg)\nabla_i Z^j
\end{align*}
which yields
\begin{align*}
\langle  L\mathcal{Y}, \mathcal{Z}\rangle &= {\rm div} \bigg(\frac{1}{N} \bigg( \mathcal{Z}\,{\rm div}\mathcal{Y} -\frac{1}{2} \pounds_{\mathcal{Y}} g (\mathcal{Z}, \cdot) - \frac{1}{2\varepsilon-R} \langle \pi, \nabla \mathcal{Y}\rangle \pi(\mathcal{Z},\cdot) \bigg)\bigg)\\
& -\frac{1}{N} \bigg({\rm div}\mathcal{Y}\, {\rm div} \mathcal{Z}- \langle {}^S\nabla \mathcal{Y}, \nabla \mathcal{Z}\rangle - \frac{1}{2\varepsilon-R} \langle \pi, \nabla \mathcal{Y}\rangle \langle \pi, \nabla\mathcal{Z}\rangle\bigg).
\end{align*}
We conclude that
\begin{align*}
\int_\Omega\langle L  \mathcal{Y}, \mathcal{Z}\rangle\, {\rm d}M &= \int_{\partial\Omega} \frac{1}{N}\bigg (\langle \mathcal{Z}, \nu\rangle \,{\rm div}\mathcal{Y} -\frac{1}{2} \pounds_{\mathcal{Y}} g (\mathcal{Z}, \nu) - \frac{1}{2\varepsilon-R} \langle \pi, \nabla \mathcal{Y}\rangle \langle\pi(\mathcal{Z}), \nu\rangle\bigg)\, {\rm d}\partial M\\
&- \int_\Omega \frac{1}{N} \bigg({\rm div}\mathcal{Y}\, {\rm div} \mathcal{Z}- \langle {}^S\nabla \mathcal{Y}, \nabla \mathcal{Z}\rangle - \frac{1}{2\varepsilon-R} \langle \pi, \nabla \mathcal{Y}\rangle \langle \pi, \nabla\mathcal{Z}\rangle\bigg)\, {\rm d}M.
\end{align*}
where $\nu$ stands for the outward normal to $\partial M$. Since the integrand of the first term in the right-hand side vanishes at the boundary, it follows that, if $L\mathcal{Y}=0$, then
\begin{equation}
\int_\Omega \frac{1}{N}\bigg({\rm div}\mathcal{Y}\, {\rm div} \mathcal{Z}- \langle {}^S\nabla \mathcal{Y}, \nabla \mathcal{Z}\rangle - \frac{1}{2\varepsilon-R} \langle \pi, \nabla \mathcal{Y}\rangle \langle \pi, \nabla\mathcal{Z}\rangle\bigg) {\rm d}M =0.
\end{equation}
In particular on $\Omega' \subset \Omega$ we have
\begin{equation}
\label{euler-lagrange1}
\int_{\Omega'} \frac{1}{N}\bigg(\langle {}^S\nabla \mathcal{Y}, \nabla \mathcal{Y}\rangle - ({\rm div}\mathcal{Y})^2+   \frac{1}{2\varepsilon-R} \langle \pi, \nabla \mathcal{Y}\rangle^2\bigg){\rm d}M=0.
\end{equation}
Now define
\begin{align}\label{el2}
I\doteq \langle {}^S\nabla \mathcal{Y}, \nabla \mathcal{Y}\rangle - ({\rm div}\mathcal{Y})^2+   \frac{1}{2\varepsilon-R} \langle \pi, \nabla \mathcal{Y}\rangle^2.
\end{align}
Since $(\psi,\beta)$ gives a reference solution for the constraint equations, using (\ref{hamit}) we know that
\begin{align*}
\frac{({\rm tr} K)^2-\vert K\vert_{g}^2}{2\epsilon-R}=1.
\end{align*}
We note that
\[
\langle {}^S\nabla \mathcal{Y}, \nabla \mathcal{Y}\rangle = \langle {}^S\nabla \mathcal{Y}, {}^S\nabla \mathcal{Y}\rangle = |{}^S \nabla \mathcal{Y}|_{ g}^2
\]
and
\[
\langle \pi, \nabla \mathcal{Y}\rangle = \langle \pi, {}^S\nabla\mathcal{Y}\rangle.
\]
We also have
\[
{\rm div}\mathcal{Y} =  g^{ij}\nabla_i \mathcal{Y}_j = \langle g, \nabla \mathcal{Y}\rangle.
\]
and
\[
\langle K, \nabla \mathcal{Y}\rangle = \langle K, {}^S\nabla\mathcal{Y}\rangle.
\]
Note that since $\pi$ is definite, then $\vert K\vert_{g}\neq 0$. Thus we denote
\[
\lambda = \frac{1}{|K|_{g}^2}\langle \nabla\mathcal{Y}, K\rangle
\]
and rewrite (\ref{el2}) above as
\begin{align*}
I &= |{}^S \nabla \mathcal{Y} - \lambda K|_{ g}^2 +2\lambda \langle {}^S\nabla\mathcal{Y}, K\rangle -\lambda^2 |K|_{ g}^2  - ({\rm div}\mathcal{Y})^2+   \frac{1}{2\varepsilon-R} \langle \pi, \nabla \mathcal{Y}\rangle^2\\
& = |{}^S \nabla \mathcal{Y} - \lambda K|_{ g}^2 +2\frac{1}{|K|_{ g}^2} \langle {}^S\nabla\mathcal{Y}, K\rangle^2 -\frac{1} {|K|_{ g}^2}\langle {}^S\nabla\mathcal{Y}, K\rangle^2  - ({\rm div}\mathcal{Y})^2+   \frac{1}{2\varepsilon-R} \langle \pi, \nabla \mathcal{Y}\rangle^2.
\end{align*}
However, since
\[
\pi = K- {\rm tr} K\, g
\]
and
\begin{eqnarray*}
& & \langle \pi, \nabla \mathcal{Y}\rangle = \langle K, \nabla \mathcal{Y}\rangle - {\rm tr} K\langle g, \nabla \mathcal{Y}\rangle =\langle {}^S\nabla\mathcal{Y}, K\rangle - {\rm tr}K\, {\rm div}\mathcal{Y},
\end{eqnarray*}
we get that
\begin{align*}
I  &= |{}^S \nabla \mathcal{Y} - \lambda K|_{ g}^2 +\frac{1}{|K|_{ g}^2} \langle {}^S\nabla\mathcal{Y}, K\rangle^2   - ({\rm div}\mathcal{Y})^2\\
& +   \frac{1}{2\varepsilon-R} \big(\langle {}^S\nabla\mathcal{Y}, K\rangle^2 -2{\rm tr}K\, {\rm div}\mathcal{Y} \langle {}^S\nabla\mathcal{Y}, K\rangle  +({\rm tr}K)^2 ({\rm div}\mathcal{Y})^2\big).
\end{align*}
Using again the fact that
\[
2\varepsilon-R = ({\rm tr} K)^2-|K|_{g}^2
\]
we have
\begin{eqnarray*}
& &  I  = |{}^S \nabla \mathcal{Y} - \lambda K|_{ g}^2 +\bigg(\frac{1}{|K|_{ g}^2}+\frac{1}{ ({\rm tr} K)^2-|K|_{g}^2}\bigg) \langle {}^S\nabla\mathcal{Y}, K\rangle^2 + \bigg( \frac{({\rm tr}K)^2}{({\rm tr} K)^2-|K|_{g}^2}  -1\bigg) ({\rm div}\mathcal{Y})^2\\
& & \,\,-   \frac{2}{2\varepsilon-R} {\rm tr}K\, {\rm div}\mathcal{Y} \langle {}^S\nabla\mathcal{Y}, K\rangle.
\end{eqnarray*}
Therefore
\begin{eqnarray*}
& &  I  = |{}^S \nabla \mathcal{Y} - \lambda K|_{ g}^2 +\frac{({\rm tr} K)^2}{|K|^2( ({\rm tr} K)^2-|K|_{g}^2)} \langle {}^S\nabla\mathcal{Y}, K\rangle^2 + \frac{ |K|_{g}^2}{({\rm tr} K)^2-|K|_{g}^2}   ({\rm div}\mathcal{Y})^2\\
& & \,\,-   \frac{2}{({\rm tr} K)^2-|K|_{g}^2} {\rm tr}K\, {\rm div}\mathcal{Y} \langle {}^S\nabla\mathcal{Y}, K\rangle.
\end{eqnarray*}
Hence we have
\begin{align*}
(({\rm tr} K)^2-|K|_{g}^2)\, I &= (({\rm tr} K)^2-|K|_{g}^2) |{}^S \nabla \mathcal{Y} - \lambda K|_{ g}^2 +\frac{({\rm tr} K)^2}{|K|_{g}^2} \langle {}^S\nabla\mathcal{Y}, K\rangle^2 +  |K|_{g}^2  ({\rm div}\mathcal{Y})^2\\
& -   2 {\rm tr}K\, {\rm div}\mathcal{Y} \langle {}^S\nabla\mathcal{Y}, K\rangle\\
&  = (({\rm tr} K)^2-|K|_{g}^2) |{}^S \nabla \mathcal{Y} - \lambda K|_{ g}^2 +\frac{({\rm tr} K)^2}{|K|_{g}^2} \langle {}^S\nabla\mathcal{Y}, K\rangle^2 +  |K|_{g}^2  ({\rm div}\mathcal{Y})^2\\
& -   2\frac{ {\rm tr}K}{|K|_{g}} |K|_{g} {\rm div}\mathcal{Y} \langle {}^S\nabla\mathcal{Y}, K\rangle.
\end{align*}
Therefore
\begin{eqnarray*}
& & (({\rm tr} K)^2-|K|^2)\, I  = (({\rm tr} K)^2-|K|^2) \bigg|{}^S \nabla \mathcal{Y} - \frac{1}{|K|_{g}^2}\langle \nabla\mathcal{Y}, K\rangle K\bigg|_{ g}^2 +\bigg(|K|_{g} {\rm div}\mathcal{Y}-\frac{{\rm tr} K}{|K|_{g}} \langle {}^S\nabla\mathcal{Y}, K\rangle\bigg)^2.
\end{eqnarray*}	
Using this in (\ref{euler-lagrange1}), we get the following
\begin{align}\label{injectivity}
\int_{\Omega'}\frac{1}{N}|{}^S \nabla \mathcal{Y} - \frac{1}{|K|_{g}^2}\langle \nabla\mathcal{Y}, K\rangle K|_{ g}^2{\rm d}M_g+\int_{\Omega'}\frac{1}{(2\epsilon-R)N}\bigg(|K|_{g} {\rm div}\mathcal{Y}-\frac{{\rm tr} K}{|K|_{g}} \langle {}^S\nabla\mathcal{Y}, K\rangle\bigg)^2{\rm d}M_g=0.
\end{align}
Since, by hypothesis, $2\epsilon-\overline{R}>0$, then both integrands are non-negative, thus, for the equality to hold, both must equal zero. From the first integral, we get that
\begin{equation}\label{inj8}
{}^S \nabla \mathcal{Y} = \frac{1}{|K|_g^2}\langle \nabla\mathcal{Y}, K\rangle K
\end{equation}
Taking traces we get
\[
{\rm div}\mathcal{Y} = \frac{1}{|K|_{g}^2}\langle \nabla\mathcal{Y}, K\rangle {\rm tr} K
\]
and multiplying by $|K|_g$ we obtain
\[
|K|_g\, {\rm div}\mathcal{Y} = \frac{ {\rm tr} K}{|K|_g}\langle \nabla\mathcal{Y}, K\rangle.
\]
which proves that (\ref{injectivity}) holds if and only if (\ref{inj8}) holds. This shows that if $\mathcal{Y}\in \ker L$ then $\mathcal{Y}$ satisfies (\ref{inj8}). The converse for this statement is also true. If we consider a field $\mathcal{Y}$ which satisfies (\ref{inj8}), after some computations, we get that $L\mathcal{Y}=0$. So if (\ref{inj8}) has only the trivial solution $\mathcal{Y}=0$, then $L$ is injective.
\end{proof}

At this point it is interesting to note that the curvature condition
$2\epsilon-R_{g}>0$ by itself does not pose any topological obstruction. This
is because, for instance, if $\epsilon$ is continuous, then\ the compactness
of $M$ implies that $\epsilon$ is bounded, hence any metric on $M$ with scalar
curvature which is \textit{more negative} than $2\min_{p\in M}\epsilon(p)$
satisfies this condition. That this last (stronger) condition can always be
satisfied can be seen as a consequence of an important result in geometric
analysis, which shows that on any compact $n$-dimensional manifold, $n\geq3$,
we can always choose a smooth metric $g^{\prime}$ such that $R(g^{\prime})=-1$
(see \cite{Aubin2},\cite{Lohkamp}). We then can always find a suitably
rescaled metric $g$ satisfying $2\epsilon-R_{g}>0$. Later on, using more
subtle arguments, we will actually show that one  can always find such metric
within an initial data set satisfying the constraint equations. Note that,
under our assumption, the Sobolev embedding theorems imply that we are
assuming $\epsilon$ at least $C^{1}$, and thus, under our regularity
hypotheses, this inequality does not impose any \textit{a priori} restriction.
\bigskip

Using these results and applying the implicit function theorem we have the
following theorem.

\begin{thm}
Suppose $(\psi_{0},\beta_{0})\in\mathcal{E}_{1}\times\mathcal{E}_{2}$
satisfies $\Phi(\psi_{0},\beta_{0})=0$. Then if $\pi$ is a definite operator
at each point of $M$, $2\epsilon-R_{g}>0$ everywhere on $M$, and if for a
given function $\mu$ on $M$ the equation
\[
{}^{S}\nabla\mathcal{Y}=\mu K
\]
has only the solution $\mathcal{Y}=0$, $\mu=0$, then there are open
neighbourhoods $\mathcal{V}\subset\mathcal{E}_{1}$ and $\mathcal{W}\subset\mathcal{E}_{2}$ of
$\psi_{0}$ and $\beta_{0}$ respectively, and a unique mapping
\[
g:\mathcal{V}\to\mathcal{W}
\]
such that $\Phi(\psi,g(\psi))=0$ for all $\psi\in\mathcal{V}$.
\end{thm}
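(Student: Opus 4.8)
The plan is to read this statement as a direct application of the implicit function theorem in Banach spaces, so that the work divides into three tasks: (a) checking that $\Phi$ is continuously (in fact infinitely) Fréchet differentiable on a neighbourhood of $(\psi_0,\beta_0)$ in $\mathcal{E}_1\times\mathcal{E}_2$; (b) recognising its partial derivative in the shift variable as the operator $L$ of (\ref{LRTS}) and proving that $L\colon\mathcal{E}_2\to\mathcal{F}$ is a topological isomorphism; (c) feeding this into the implicit function theorem.

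For (a) I would first note that at the reference solution $(\mathrm{tr}_{g_0}\gamma_0)^2-|\gamma_0|^2_{g_0}=N^2(2\epsilon_0-R_{g_0})$, which is strictly positive by (\ref{lapse}) and hypothesis (ii); since $s>n/2$ the Sobolev space $H_s$ is a Banach algebra continuously embedded in $C^0$, so on a small enough neighbourhood the radicand in (\ref{Phi}) stays uniformly positive and the square root is applied only in a range where it is smooth. Then I would carry out the derivative count: $g\in H_{s+3}$ gives Christoffel symbols in $H_{s+2}$ and $R_g\in H_{s+1}$; $\dot g\in H_{s+1}$ and $\nabla\beta$ (with $\beta\in H_{s+2}$) lie in $H_{s+1}$; hence $\gamma$, $\pi$, $N$, $N^{-1}$ and the relevant contractions all lie in $H_{s+1}$, the products remain in $H_{s+1}$ by the algebra property, and the final divergence lands in $H_s=\mathcal{F}$. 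Composition with the smooth functions involved (square root, reciprocal, polynomials) is smooth on these Sobolev spaces, so $\Phi$ is $C^\infty$ near $(\psi_0,\beta_0)$, in particular $C^1$.

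For (b) the expression for $L=\delta\Phi/\delta\beta$ is already given by (\ref{LRTS}): it is a linear second order differential operator $H_{s+2}(T^1_0M)\to H_s(T^0_1M)$ whose coefficients (assembled from $N^{-1}$, $\pi$ and $2\epsilon-R_g$) lie in $H_{s+1}$, hence are at least $C^1$. Proposition~1 together with hypothesis (i) makes $L$ elliptic, and the integration-by-parts identity recorded just before Proposition~2 shows $L$ is formally self-adjoint. I would then invoke the Fredholm theory of elliptic operators on a compact manifold without boundary: $L$ is Fredholm as a map $H_{s+2}\to H_s$, its kernel is finite-dimensional and, by elliptic regularity, consists of smooth vector fields, and $\mathrm{coker}\,L\cong\ker L^{*}=\ker L$ by self-adjointness. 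Proposition~2, whose hypotheses are exactly (i)--(iii), gives $\ker L=\{0\}$; hence $L$ is a continuous linear bijection and, by the open mapping theorem, an isomorphism of $\mathcal{E}_2$ onto $\mathcal{F}$.

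For (c) I would simply apply the Banach-space implicit function theorem to $\Phi$ at $(\psi_0,\beta_0)$: since $\Phi$ is $C^1$ there, $\Phi(\psi_0,\beta_0)=0$, and $D_\beta\Phi|_{(\psi_0,\beta_0)}=L$ is an isomorphism, there are open neighbourhoods $\mathcal{V}\subset\mathcal{E}_1$ of $\psi_0$ and $\mathcal{W}\subset\mathcal{E}_2$ of $\beta_0$ and a unique $C^1$ map $g\colon\mathcal{V}\to\mathcal{W}$ with $g(\psi_0)=\beta_0$ and $\Phi(\psi,g(\psi))=0$ on $\mathcal{V}$; shrinking $\mathcal{V},\mathcal{W}$ if necessary so that the radicand in (\ref{Phi}) stays positive and $\pi$ stays definite keeps everything above well-defined. \emph{The main obstacle} is step (b): converting ``elliptic and self-adjoint and injective'' into ``isomorphism'' rests on having the Fredholm theory available in these precise Sobolev spaces, which is exactly what forces the regularity choices in the definitions of $\mathcal{E}_1$ and $\mathcal{E}_2$ (so that the coefficients of $L$ are regular enough for that theory). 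The differentiability bookkeeping in (a) is routine once one uses that $H_s$ is a Banach algebra for $s>n/2$, and (c) is then immediate.
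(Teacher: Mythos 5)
Your proposal matches the paper's strategy exactly: the paper also treats Theorem~1 as a direct consequence of the implicit function theorem, with Proposition~1 supplying ellipticity, the noted formal self-adjointness plus elliptic Fredholm theory on the compact $M$ reducing bijectivity to injectivity, and Proposition~2 supplying that injectivity. You fill in more detail than the paper (the Sobolev-algebra bookkeeping for differentiability of $\Phi$ and the explicit Fredholm step), but the route and the key ingredients are the same.
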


Notice that this theorem shows that given an initial data set $\psi_{0}%
\in\mathcal{E}_{1}$ for which a solution $\beta_{0}$ of the reduced constraint
equations $\Phi(\psi,\beta)=0$ exists, if the conditions stated in the theorem
are satisfied, then for every $\psi\in\mathcal{V}\subset\mathcal{E}_{1}$ there
is a unique solution of the reduced constraint equations. Then taking lapse
defined as in (\ref{lapse}), we get a solution of the full constraint
equations. This answers our original question.

At this point, we would like to show that the conditions stated in the
previous theorem are not too restrictive. By this we mean that, generically,
there are solutions of the constraint equations satisfying all these
conditions. With this in mind, notice that, given a solution $(g,K)$ for
(\ref{hamit})-(\ref{momentum}) satisfying all the hypotheses of the previous
theorem, we can use this solution obtained in the usual way, to obtain a
reference solution for the TSP. In order to do this, just consider any given
$N\in H_{s}$, $N>0$, and $\beta\in\mathcal{E}_{2}$ and take
\begin{equation}
\dot g_{ij}\doteq2NK_{ij}+(\nabla_{i} \beta_{j}+\nabla_{j} \beta_{i}).
\end{equation}
Then the set $((g, \dot g,\epsilon,S), \beta)$ gives a reference solution of
the constraint equations. Thus, what we need to show is that the constraint
equations (\ref{hamit})-(\ref{momentum}) on a compact manifold $M$ always
admit a solution $(g,K)$, satisfying all the hypotheses of the theorem. A
first step in this direction is the following proposition.

\begin{prop}
Suppose $(M, g)$ is a an n-dimensional compact Riemannian manifold. Suppose
that $(g,K)$ satisfy the constraint equations \textrm{(\ref{hamit})} and
\textrm{(\ref{momentum})}, where $K$ is a $(0,2)$ tensor field, and suppose
that $2\epsilon-R_{g}>0$ on M. Then, if the Ricci tensor on $M$ is negative
definite, the equation ${}^{S}\nabla\mathcal{Y}=\mu K$ has only the trivial
solution $\mathcal{Y}=0$ and $\mu=0$.
\end{prop}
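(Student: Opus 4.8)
The plan is to produce an integral identity for $\mathcal{Y}$ by combining the Bochner formula for $1$-forms with the Hamiltonian constraint, and then to read off $\mu\equiv 0$ and $\mathcal{Y}\equiv 0$ from a sign argument. No delicate estimate is needed; the content is entirely in choosing the right identity.

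First I would identify $\mathcal{Y}$ with its $g$-dual $1$-form and use that, on the closed manifold $M$, the Hodge Laplacian satisfies $\Delta_H = \nabla^{*}\nabla + \mathrm{Ric}$ on $1$-forms (Bochner). Pairing with $\mathcal{Y}$ and integrating gives
\[
\int_M \bigl(|d\mathcal{Y}|_g^2 + (\mathrm{div}\,\mathcal{Y})^2\bigr)\,\mathrm{d}M_g = \int_M |\nabla\mathcal{Y}|_g^2\,\mathrm{d}M_g + \int_M \mathrm{Ric}(\mathcal{Y},\mathcal{Y})\,\mathrm{d}M_g .
\]
Decomposing $\nabla\mathcal{Y} = {}^{S}\nabla\mathcal{Y} + {}^{A}\nabla\mathcal{Y}$ orthogonally and using the standard relation $|d\mathcal{Y}|^2 = 2\,|{}^{A}\nabla\mathcal{Y}|^2$ (being careful about the conventional factors in the $2$-form versus $(0,2)$-tensor norms), this rearranges to
\[
\int_M |{}^{S}\nabla\mathcal{Y}|_g^2\,\mathrm{d}M_g - \int_M (\mathrm{div}\,\mathcal{Y})^2\,\mathrm{d}M_g + \int_M \mathrm{Ric}(\mathcal{Y},\mathcal{Y})\,\mathrm{d}M_g = \int_M |{}^{A}\nabla\mathcal{Y}|_g^2\,\mathrm{d}M_g \geq 0 .
\]
This step is just bookkeeping with curvature identities; the only thing to watch is the normalization of the norms, so the factor $2$ comes out right.

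Next I would substitute the equation ${}^{S}\nabla\mathcal{Y} = \mu K$. Taking the trace gives $\mathrm{div}\,\mathcal{Y} = \mu\,\mathrm{tr}_g K = \mu\tau$, so $|{}^{S}\nabla\mathcal{Y}|_g^2 = \mu^2|K|_g^2$ and $(\mathrm{div}\,\mathcal{Y})^2 = \mu^2\tau^2$. The Hamiltonian constraint (\ref{hamit}), in the form $|K|_g^2 - \tau^2 = R_g - 2\epsilon$, then turns the first two integrals into $\int_M \mu^2\,(R_g - 2\epsilon)\,\mathrm{d}M_g$, so the identity becomes
\[
\int_M \mu^2\,(R_g - 2\epsilon)\,\mathrm{d}M_g + \int_M \mathrm{Ric}(\mathcal{Y},\mathcal{Y})\,\mathrm{d}M_g = \int_M |{}^{A}\nabla\mathcal{Y}|_g^2\,\mathrm{d}M_g \geq 0 .
\]
By hypothesis $2\epsilon - R_g > 0$ on the compact $M$, so $R_g - 2\epsilon \leq -c_0 < 0$ for some constant $c_0$; and negative definiteness of the Ricci tensor on compact $M$ gives $\mathrm{Ric}(\mathcal{Y},\mathcal{Y}) \leq -c_1\,|\mathcal{Y}|_g^2$ with $c_1 > 0$. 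Hence the left-hand side is at most $-c_0\int_M\mu^2\,\mathrm{d}M_g - c_1\int_M|\mathcal{Y}|_g^2\,\mathrm{d}M_g \leq 0$; since it is also $\geq 0$, both integrals vanish, i.e.\ $\mu\equiv 0$ and $\mathcal{Y}\equiv 0$, which is the claim.

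I do not expect a genuine obstacle here. If there is a subtle point, it is the temptation to split into the cases $\mu\equiv 0$ (where the equation says $\mathcal{Y}$ is Killing, which the Bochner theorem already kills since $\mathrm{Ric}<0$) and $\mu\not\equiv 0$ (which looks harder); the virtue of the \emph{integrated} Bochner identity is that it treats the scalar $\mu$ and the vector $\mathcal{Y}$ at once, and it is precisely the two competing negativity hypotheses, $2\epsilon > R_g$ from the energy condition and $\mathrm{Ric}<0$, that make both terms pull in the same direction. Note also that only the Hamiltonian constraint enters; the momentum constraint and the source $S$ play no role. Finally, regularity causes no trouble: the relevant $\mathcal{Y}$ lie in $H_{s+2}(T^{1}_{0}M)$ with $s>n/2$, so all the integrations by parts on the closed manifold $M$ are justified.
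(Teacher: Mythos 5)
Your proof is correct and reaches exactly the same integral identity as the paper,
\[
\int_M \mu^2\,(R_g-2\epsilon)\,\mathrm{d}M_g+\int_M \mathrm{Ric}(\mathcal{Y},\mathcal{Y})\,\mathrm{d}M_g=\int_M |{}^{A}\nabla\mathcal{Y}|_g^2\,\mathrm{d}M_g,
\]
followed by the same two-sided sign argument; the only difference in presentation is that you invoke the Weitzenb\"ock formula as a ready-made lemma, whereas the paper derives the identity by hand from the Ricci commutator $R_{ij}\mathcal{Y}^j=\nabla_j\nabla_i\mathcal{Y}^j-\nabla_i\nabla_j\mathcal{Y}^j$ followed by integration by parts.
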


\begin{proof}
Suppose $\mathcal{Y}$ and $\mu$ satisfy ${}^{S}\nabla\mathcal{Y}=\mu K$. From the definition of the curvature tensor we get the following
\begin{align*}
{R}_{ij}\mathcal{Y}^j&= \nabla_j\nabla_i \mathcal{Y}^j-\nabla_i\nabla_j \mathcal{Y}^j\\
&= 2\nabla_j {}^S \nabla_i \mathcal{Y}^j-\nabla_j \nabla^j \mathcal{Y}_i-\nabla_i(\mu K^j_j)\\
&= 2\nabla_j (\mu K_i^j)-\nabla_j\nabla^j \mathcal{Y}_i-\nabla_i(\mu K^j_j).
\end{align*}
Then we have
\begin{align*}
R_{ij}\mathcal{Y}^j\mathcal{Y}^i =2\mathcal{Y}^i\nabla_j (\mu K_i^j)-\mathcal{Y}^i\nabla_j \nabla^j \mathcal{Y}_i-\mathcal{Y}^i\nabla_i(\mu K^j_j)
\end{align*}
We can write this last expression in the following way, which is globally defined:
\begin{equation}
{\rm Ric}_{g }(\mathcal{Y}, \mathcal{Y})=2\, {\rm div}(\mu K)\cdot \mathcal{Y} -\langle \mathcal{Y}, \Delta \mathcal{Y}\rangle-\langle \nabla{\rm tr}_{g}\,\mu K, \mathcal{Y}\rangle.
\end{equation}
In this last expression, $\Delta$ stands for the connection Laplacian defined by ${\rm tr}_g \nabla^2$.  From the previous expression we get that
\begin{equation}\label{negricc1}
\int_M\big( {\rm Ric}_{ g }(\mathcal{Y}, \mathcal{Y})+\langle \mathcal{Y}, \Delta \mathcal{Y}\rangle+\langle \nabla(\mu{\rm tr}_{g}K), \mathcal{Y}\rangle-2\, {\rm div}(\mu K)\cdot \mathcal{Y}\big)\, {\rm d}M_g=0
\end{equation}
Applying divergence theorem, we get
\begin{equation}\label{negricc2}
\int_M ( {\rm Ric}_{ g }(\mathcal{Y}, \mathcal{Y})-|\nabla\mathcal{Y}|^2_g-\mu^2 (({\rm tr}_g K)^2 - |K|^2_g)+ \mu \langle K, \nabla \mathcal{Y}\rangle\big)\, {\rm d}M_g =0
\end{equation}
Now we will analyze the second and fourth terms in the integrand. In order to do this, we rewrite these expressions in the following way:
\begin{align*}
\vert \nabla\mathcal{Y}\vert^2_{g}-\langle \mu  K, \nabla\mathcal{Y}\rangle &= \langle \nabla\mathcal{Y}, \nabla \mathcal{Y}\rangle - \langle {}^S\nabla\mathcal{Y}, \nabla \mathcal{Y}\rangle =\langle \nabla\mathcal{Y}-{}^S\nabla\mathcal{Y}, \nabla\mathcal{Y}\rangle\\
&=\langle {}^A\nabla \mathcal{Y}, \nabla\mathcal{Y}\rangle = \langle {}^A\nabla \mathcal{Y}, {}^A\nabla\mathcal{Y}\rangle \\
&= |{}^A\nabla \mathcal{Y}|^2_g,
\end{align*}
where
\[
{}^A\nabla_i\mathcal{Y}_j = \nabla_i \mathcal{Y}_j - {}^S\nabla_i\mathcal{Y}_j = \frac{1}{2}\big(\nabla_i\mathcal{Y}_j-\nabla_j\mathcal{Y}_i\big).
\]
Then, (\ref{negricc2}) can be rewritten as
\begin{equation}
\int_M ( {\rm Ric}_{g}(\mathcal{Y}, \mathcal{Y})-|{}^A\nabla\mathcal{Y}|^2_g-\mu^2 ( 2\epsilon- R_g)\big)\, {\rm d}M_g=0
\end{equation}
Thus, if ${\rm Ric}_{g}$ is negative definite, then the integrand is non-positive. Hence in order for the last equality to hold, each term in the integrand has to equal zero. So the only possible $\mathcal{Y}$ and $\mu$ which can satisfy ${}^{S}\nabla \mathcal{Y}= \mu K$ under these geometric conditions are $\mathcal{Y}=0$, $\mu=0$. \end{proof}

This proposition implies that, given a solution of the constraint equations
$(g_{0},K_{0})$ satisfying $2\epsilon-R>0$, then, if $\pi$ is a definite
operator on $M$, and $\mathrm{Ric}_{g_{0}}$ is negative definite, then the
linearization $L=\frac{\delta\Phi}{\delta\beta}\big|_{(\psi_{0},\beta_{0})}$
is an isomorphism and Theorem 1 applies. A 3-dimensional version of the
previous proposition was shown in \cite{Bartnik}.

We will now show that any compact $n$-dimensional manifold admits a solution
of the constraint equations satisfying all the hypotheses of Theorem 1. The
first step in this direction is to look for a solution of the constraint
equations of the form $(h,\alpha h)$, with $h$ being a Riemannian metric and
$\alpha$ a positive constant. That is, we are considering $K=\alpha h$ from
the beginning. We will restrict ourselves to solutions of (\ref{hamit}%
)-(\ref{momentum}) with $S=0$, i.e, with zero momentum density. With this set
up, equation (\ref{momentum}) is automatically satisfied and we just need to
consider equation (\ref{hamit}), which, under these conditions, becomes the
following equation, which is posed for $h$:
\begin{align}
\label{giulini1}R_{h}=2\epsilon-\alpha^{2}n(n-1).
\end{align}

In order to guarantee the existence of solutions for (\ref{giulini1}), we will
appeal to the following well-established theorem:

\begin{thm}
Let $M$ be a $C^{\infty}$ compact manifold of dimension $n\geq3$. If $f\in
C^{\infty}(M)$ is negative somewhere, then there is a $C^{\infty}$ Riemannian
metric on $M$ with $f$ as its scalar curvature.
\end{thm}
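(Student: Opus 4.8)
\emph{Proof proposal.} This is the Kazdan--Warner theorem on prescribing scalar curvature in the case when $f$ is negative somewhere (the case in which no obstruction occurs), and the natural attack is a conformal reduction followed by the method of sub- and super-solutions. First I would fix, using the result already quoted above (\cite{Aubin2},\cite{Lohkamp}), a smooth background metric $g_{0}$ on $M$ with $R_{g_{0}}\equiv-1$. Put $p=\tfrac{2n}{n-2}$, $a_{n}=\tfrac{4(n-1)}{n-2}$, so $p-1=\tfrac{n+2}{n-2}>1$, and write $L_{g_{0}}u\doteq-a_{n}\Delta_{g_{0}}u+R_{g_{0}}u$ for the conformal Laplacian, with $\Delta_{g_{0}}=\mathrm{tr}_{g_{0}}\nabla^{2}$ as elsewhere in the paper. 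For $u\in C^{\infty}(M)$ with $u>0$, the metric $g=u^{4/(n-2)}g_{0}$ has scalar curvature $R_{g}=u^{1-p}L_{g_{0}}u=u^{1-p}\big(-a_{n}\Delta_{g_{0}}u-u\big)$, so the theorem reduces to finding a positive $u\in C^{\infty}(M)$ with
\[
-a_{n}\Delta_{g_{0}}u-u=f\,u^{p-1}.
\]

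I would solve this equation by producing ordered sub- and super-solutions and running a monotone iteration. A sub-solution is elementary and needs no hypothesis on $f$: the constant $\underline{u}\equiv c$ satisfies $L_{g_{0}}\underline{u}=-c$, and the sub-solution inequality $-c\le f\,c^{p-1}$ holds automatically where $f\ge0$ and holds where $f<0$ as soon as $c^{p-2}\le(\sup_{M}|f|)^{-1}$, which is possible since compactness of $M$ forces $f$ to be bounded. Given an ordered super-solution $\bar{u}\ge\underline{u}$, the standard monotone iteration --- solving the coercive linear problems $(L_{g_{0}}+\Lambda)u_{k+1}=f\,u_{k}^{p-1}+\Lambda u_{k}$ with $\Lambda$ large and $u_{0}=\bar{u}$ --- yields a monotone sequence converging to a solution $u$ with $\underline{u}\le u\le\bar{u}$. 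Since $u$ is a priori bounded the critical Sobolev exponent causes no trouble, and elliptic regularity together with $f\in C^{\infty}$ upgrades $u$ to a positive smooth function; then $g=u^{4/(n-2)}g_{0}$ is the metric we want.

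The heart of the matter --- and the step I expect to be the main obstacle --- is the construction of the super-solution $\bar{u}$, which is exactly where one must use that $f$ is negative somewhere. Having a super-solution is equivalent to exhibiting a metric conformal to $g_{0}$ with scalar curvature $\ge f$ pointwise. When $f<0$ everywhere this is immediate, and I would dispose of this case first: a large constant $\bar{u}\equiv C$ works, since $L_{g_{0}}C=-C\ge f\,C^{p-1}$ once $C^{p-2}\ge(\min_{M}|f|)^{-1}$. When $f$ is positive somewhere no constant can be a super-solution; worse, no positive $\bar{u}$ can satisfy $-a_{n}\Delta_{g_{0}}\bar{u}-\bar{u}\ge0$ everywhere, since integrating that inequality over the closed manifold $M$ and using $\int_{M}\Delta_{g_{0}}\bar{u}\,\mathrm{d}M_{g_{0}}=0$ gives $-\int_{M}\bar{u}\,\mathrm{d}M_{g_{0}}\ge0$, a contradiction. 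Hence $\bar{u}$ must be non-constant, and the superlinear term $f\bar{u}^{p-1}$ must supply the negativity that the operator $-a_{n}\Delta_{g_{0}}-1$ cannot supply by itself at every point where $-a_{n}\Delta_{g_{0}}\bar{u}-\bar{u}$ fails to be nonnegative; roughly, one takes $\bar{u}$ small on $\{f\ge0\}$ with a suitably curved profile there (using $p-1>1$ to make $f\bar{u}^{p-1}$ negligible), while letting $\bar{u}$ match a large constant on the bulk of the nonempty open set $\{f<0\}$, where the term $-\bar{u}$ leaves room and where $C^{p-1}\gg C$ controls the transition. Carrying out this localized barrier construction carefully --- it is possible precisely because $f$ is strictly negative on a nonempty open set --- is the delicate core of the Kazdan--Warner argument; alternatively, since here the statement is only being invoked, one may simply cite their theorem. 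With the ordered pair $(\underline{u},\bar{u})$ in hand, the rest is the routine iteration and regularity bootstrap sketched above.
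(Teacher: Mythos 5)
The paper does not prove this theorem at all: it is Kazdan--Warner's prescribed scalar curvature theorem, and the paper simply cites \cite{KW} and \cite{Aubin} and then, crucially, \emph{records the form} of the solution, namely $h=(\phi^{-1})^{*}(u^{4/(n-2)}g')$ with $\phi$ a suitably chosen diffeomorphism, because that explicit form (including the diffeomorphism) is what drives the subsequent Proposition~3 on the absence of conformal Killing fields. So the expected content here is a citation plus a remark on the structure of the metric produced, not a proof.

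Your sketch captures the conformal-reduction-plus-sub/super-solution skeleton, and the sub-solution, the monotone iteration, and the regularity bootstrap are all fine. The genuine gap is in the step you yourself flag as ``the delicate core'': you propose to build a super-solution $\bar{u}$ directly for the fixed background $g_{0}$ by a localized barrier construction, and assert that this is possible ``precisely because $f$ is strictly negative on a nonempty open set.'' That is not how the Kazdan--Warner proof goes, and it does not work in general: for a fixed $g_{0}$ with $R_{g_{0}}=-1$, it is \emph{not} true that every $f$ which is somewhere negative is realizable as the scalar curvature of a metric conformal to $g_{0}$; one runs into spectral/measure-theoretic obstructions of Aubin--Bismuth/Rauzy type governed by how much of $M$ the set $\{f\ge 0\}$ occupies, and your integral bookkeeping ($\int_{M}\Delta_{g_{0}}\bar{u}=0$ forcing large superharmonic portions on $\{f\ge 0\}$ against a shrinking subharmonic reservoir on $\{f<0\}$) is exactly the kind of balance that can fail. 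What Kazdan--Warner actually do is insert a diffeomorphism: their $L^{p}$-approximation lemma shows that for suitable $\phi\in\mathrm{Diff}(M)$, $f\circ\phi$ can be made $L^{p}$-close to a negative constant, for which the conformal equation on $g_{0}$ \emph{is} solvable by the easy case of your argument; pushing back by $(\phi^{-1})^{*}$ then realizes $f$ itself. That diffeomorphism is not an optional technicality --- it is the missing idea in your super-solution step, and it is the very feature of the construction that the paper needs to carry forward into equation (\ref{generic2}) and Proposition~3.
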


This theorem was proved by Kazdan and Warner \cite{KW}, and its proof can also
be found in \cite{Aubin}. Using this theorem, we see that, if the right-hand
side of (\ref{giulini1}) is negative somewhere, then, for smooth sources
$(\epsilon\in C^{\infty})$, we have that (\ref{giulini1}) always admits a
smooth solution. In order to guarantee this last condition, just take
$\alpha^2>\min{\frac{2\epsilon}{n(n-1)}}$. A solution constructed in this way
satisfies two of the three conditions required by Theorem 1, that is, it satisfies

\begin{itemize}
\item $2\epsilon-R_{h}>0$, which comes from (\ref{giulini1}).

\item $\pi$ is negative definite, since from $K=\alpha h$ we get that
$\pi=\alpha(1-n)h$.
\end{itemize}

In this context, the last condition of Theorem 1 becomes the statement that
$h$ does not admit conformal Killing fields. We will show that we can always
find a solution $h$ of (\ref{giulini1}) with this property. In order to do
this, we need to make a remark on how Theorem 2 is proved (See, for example,
\cite{Aubin} chapter 6). The proof of this theorem begins with the statement
that we can choose on $M$ a Riemannian metric $g^{\prime}$, with
$R_{g^{\prime}}=-1$, which is something known from \cite{Aubin2}, and then one
finds a conformal metric to $g^{\prime}$ satisfying the theorem. In fact, it
is shown that $h$ has the following form:
\begin{align}
\label{generic1}h=(\phi^{-1})^{*}(u^{\frac{4}{n-2}}g^{\prime})
\end{align}
where $u$ is a positive function and $\phi$ is a suitably chosen
diffeomorphism. In this process, we claim that we can choose $g^{\prime}$
without conformal Killing fields. We support this claim using the results
shown by Lohkamp in \cite{Lohkamp}. There, it is shown that every manifold $M$
of dimension $n\geq3$ admits a complete metric with negative definite Ricci
tensor. As a corollary of this theorem, it is shown that, starting from such a
Riemannian metric $g$ on $M$ with negative definite Ricci tensor, we can find
a conformal metric $g^{\prime}=v^{\frac{4}{n-2}}g$, such that $R(g^{\prime})=-1$.
In this way, using this metric as the starting point in the proof of Theorem
2, we get that the metric $h$ solving (\ref{giulini1}) has the following
form:
\begin{align}
\label{generic2}h=(\phi^{-1})^{*}(u^{\frac{4}{n-2}}v^{\frac{4}{n-2}}g)
\end{align}
where $g$ has negative definite Ricci tensor. We now have the following:

\begin{prop}
The solution \textrm{(\ref{generic2})} obtained for \textrm{(\ref{giulini1})}
does not admit conformal Killing fields.
\end{prop}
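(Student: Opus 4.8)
\emph{Proof sketch.} The plan is to exploit the fact that admitting a non-trivial conformal Killing field is invariant under both diffeomorphisms and conformal changes of metric. This reduces the assertion for $h$ to the same assertion for the metric $g$ appearing in (\ref{generic2}), which by construction has negative definite Ricci tensor on the compact manifold $M$; one then rules out conformal Killing fields on $(M,g)$ by a Bochner-type integral identity.

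First I would record the invariance. A vector field $X$ is conformal Killing for a metric $\tilde g$ precisely when $\pounds_X\tilde g=\lambda\,\tilde g$ for some function $\lambda$. If $\tilde g=e^{2\omega}\bar g$, then $\pounds_X\tilde g=e^{2\omega}\big(\pounds_X\bar g+2(X\omega)\,\bar g\big)$, so $\pounds_X\tilde g=\lambda\,\tilde g$ holds for some $\lambda$ if and only if $\pounds_X\bar g=\tilde\lambda\,\bar g$ holds for some $\tilde\lambda$; hence the conformal Killing condition is unchanged under a conformal rescaling. Likewise, if $\psi$ is a diffeomorphism of $M$ then $X$ is conformal Killing for $\psi^{*}\bar g$ exactly when $\psi_{*}X$ is conformal Killing for $\bar g$. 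Since, by (\ref{generic2}), $h=(\phi^{-1})^{*}\big(e^{2\omega}g\big)$ with $e^{2\omega}=u^{\frac{4}{n-2}}v^{\frac{4}{n-2}}$ a smooth, strictly positive function and $\phi$ a diffeomorphism, pushforward by $\phi^{-1}$ identifies the space of conformal Killing fields of $h$ with that of $g$. It therefore suffices to prove that the compact manifold $(M,g)$, with $\mathrm{Ric}_g$ negative definite, admits no non-trivial conformal Killing field.

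For this last step I would run the Bochner-type computation used in the proof of Proposition 3, now with the symmetric tensor there taken to be $g$ itself. Contracting the identity $R_{ij}Y^{j}=\nabla_{j}\nabla_{i}Y^{j}-\nabla_{i}\nabla_{j}Y^{j}$ with $Y^{i}$ and integrating over the closed manifold $M$, integration by parts yields
\begin{equation}
\int_{M}\mathrm{Ric}_{g}(Y,Y)\,\mathrm{d}M_{g}=\int_{M}\Big(|{}^{A}\nabla Y|_{g}^{2}+(\mathrm{div}\,Y)^{2}-|{}^{S}\nabla Y|_{g}^{2}\Big)\,\mathrm{d}M_{g}.
\end{equation}
If $Y$ is conformal Killing, then ${}^{S}\nabla Y=\frac{1}{n}(\mathrm{div}\,Y)\,g$, so $|{}^{S}\nabla Y|_{g}^{2}=\frac{1}{n}(\mathrm{div}\,Y)^{2}$ and the right-hand side equals $\int_{M}\big(|{}^{A}\nabla Y|_{g}^{2}+\frac{n-1}{n}(\mathrm{div}\,Y)^{2}\big)\,\mathrm{d}M_{g}\geq 0$. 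On the other hand, since $\mathrm{Ric}_{g}$ is negative definite and $M$ is compact, the left-hand side is $\leq 0$, with equality only if $Y\equiv 0$. Hence both sides vanish and $Y=0$, which together with the reduction above shows that $h$ has no conformal Killing fields.

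The real content is the first step: one cannot apply Bochner's method to $h$ directly, because the conformal factor $u$ was chosen so as to fix the scalar curvature of $h$ and in general destroys the sign of its Ricci tensor — it is only the conformal class of $h$, not $h$ itself, that retains the negatively curved representative $g$ furnished by Lohkamp's theorem. Once this observation is in place the remainder is routine; the only points to verify are that $e^{2\omega}=u^{\frac{4}{n-2}}v^{\frac{4}{n-2}}$ is smooth and strictly positive (immediate from $u,v>0$) and the sign bookkeeping in the displayed identity, which is precisely the one already carried out in the proof of Proposition 3.
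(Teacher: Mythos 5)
Your proposal is correct and follows essentially the same route as the paper: reduce the conformal Killing field (CKF) problem for $h$ to that for $g$ via the diffeomorphism/conformal invariance of the CKF condition, then invoke that a compact manifold with negative definite Ricci tensor carries no nontrivial CKF. The only difference is that you supply a self-contained Bochner-type proof of that last fact (correctly parallel to the integral identity used in Proposition 3), whereas the paper simply cites it as well known from \cite{C-B1}.
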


\begin{proof}
It is a well-known fact that, on compact manifolds, metrics with negative definite Ricci tensor do not admit conformal Killing fields (see, for instance, \cite{C-B1} chapter 7), which is the case of the metric $g$. Now, imagine that $h$ admitted a conformal Killing field $Y\in\Gamma(TM)$, that is $\pounds_Yh=\lambda h$, for some $\lambda\in C^\infty(M)$, where $\pounds_Yh$ is the Lie derivative of $h$ with respect to $Y$. From (\ref{generic2}) we get that
\begin{align*}
g=(uv)^{\frac{-4}{n-2}}\phi^{*}(h)\doteq \mu\phi^{*}(h).
\end{align*}
Define $X\doteq \phi^{-1}_{*}Y\in\Gamma(TM)$, so that $Y=\phi_{*}X$. We claim that under these conditions $X$ is a conformal Killing field of $g$. To check this, we have to compute the Lie derivative of $g$ with respect to $X$, that is
\begin{align*}
\pounds_Xg=X(\mu)\phi^{*}(h)+\mu\pounds_X(\phi^{*}h).
\end{align*}
Using that $\pounds_X(\phi^{*}h)=\phi^{*}(\pounds_Yh)$ (see \cite{AMR}) and the fact that $Y$ is a conformal Killing field for $h$, then we get that
\begin{align*}
\pounds_Xg=\big(X(\log\mu)+\lambda\circ\phi\big)g
\end{align*}
which shows that $X$ is a conformal Killing field for $g$. But this contradicts the fact that $g$ has negative definite Ricci tensor, thus $h$ can not admit conformal Killing fields.
\end{proof}

Thus, we conclude that the solution we have constructed for the constraint
equations satisfies all the hypotheses of Theorem 1 and therefore can be used
as a reference solution. Then we can state the following theorem.

\begin{thm}
\label{thmTS2} Any smooth compact $n$-dimensional manifold $M$, $n\geq3$,
admits a smooth solution $(N,\beta)$ of the constraint equations
(\ref{hamit})-(\ref{momentum}) with $S=0$, with prescribed smooth free data
$\psi=(g,\dot{g},\epsilon_{\alpha},0)\in\mathcal{E}_{1}$, such that in an
$\mathcal{E}_{1}$-neighbourhood of $\psi$ the TSP is well-posed.
\end{thm}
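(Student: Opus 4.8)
The plan is to assemble the ingredients established above into a single explicit construction and then invoke Theorem 1. First I would fix an arbitrary smooth energy density $\epsilon\in C^{\infty}(M)$ and choose a positive constant $\alpha$ with $\alpha^{2}>\min_{p\in M}\frac{2\epsilon(p)}{n(n-1)}$, so that the right-hand side of (\ref{giulini1}) is negative somewhere on $M$. By Theorem 2 (Kazdan--Warner), equation (\ref{giulini1}) then admits a smooth Riemannian metric solution, and---running the proof of Theorem 2 starting from a metric with negative definite Ricci tensor, which exists by Lohkamp's theorem---we obtain a solution $h$ of the special form (\ref{generic2}). I then set the underlying metric to be $h$, take $K\doteq\alpha h$ and $S\doteq 0$, and write $\epsilon_{\alpha}\doteq\epsilon$. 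The pair $(h,K)$ solves the full constraint system (\ref{hamit})--(\ref{momentum}): (\ref{hamit}) is exactly (\ref{giulini1}), and (\ref{momentum}) holds trivially because $K$ is a constant multiple of the metric, so $\nabla_{j}K^{j}_{i}-\nabla_{i}K^{j}_{j}=0=S_{i}$.

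Next I would check that $(h,K)$ meets all three hypotheses of Theorem 1. Condition (ii), $2\epsilon-R_{h}>0$, is immediate from (\ref{giulini1}). Condition (i) holds because, by (\ref{pi}), $\pi=K-\mathrm{tr}_{h}K\,h=\alpha(1-n)h$ is a (negative) definite operator. For condition (iii): the equation ${}^{S}\nabla\mathcal{Y}=\mu K=\mu\alpha h$ says precisely that the vector field dual to $\mathcal{Y}$ is a conformal Killing field of $h$ with conformal factor $2\mu\alpha$; since by Proposition 4 the metric $h$ of the form (\ref{generic2}) admits no non-trivial conformal Killing field, we conclude $\mathcal{Y}=0$, whence $\mu\alpha h=0$ forces $\mu=0$.

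With a solution $(h,K)$ of the constraints satisfying the hypotheses of Theorem 1 in hand, I would produce the corresponding reference datum for the TSP exactly as described before Proposition 3: pick any $N\in C^{\infty}(M)$ with $N>0$ and any $\beta_{0}\in\mathcal{E}_{2}$ (for instance $\beta_{0}=0$), and define $\dot g_{ij}\doteq 2NK_{ij}+(\nabla_{i}\beta_{0\,j}+\nabla_{j}\beta_{0\,i})$. A short computation gives $\gamma=NK$, hence $(\mathrm{tr}_{h}\gamma)^{2}-|\gamma|^{2}_{h}=N^{2}\alpha^{2}n(n-1)>0$, and, using (\ref{giulini1}), the lapse formula (\ref{lapse}) returns $N$; consequently $\psi_{0}\doteq(h,\dot g,\epsilon_{\alpha},0)\in\mathcal{E}_{1}$ (all the data are smooth, hence lie in the relevant Sobolev spaces on the compact manifold $M$) and $\Phi(\psi_{0},\beta_{0})=0$ by the converse procedure noted after (\ref{div-S}). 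Applying Theorem 1 at $(\psi_{0},\beta_{0})$ then yields open neighbourhoods $\mathcal{V}\subset\mathcal{E}_{1}$ of $\psi_{0}$ and $\mathcal{W}\subset\mathcal{E}_{2}$ of $\beta_{0}$ together with a unique map $\mathcal{V}\to\mathcal{W}$ solving $\Phi(\psi,\cdot)=0$, which is exactly the assertion that the TSP is well-posed in an $\mathcal{E}_{1}$-neighbourhood of $\psi$.

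The main obstacle is not in this final assembly, which is essentially bookkeeping, but in the ingredients it relies on: the geometric-analysis input (Kazdan--Warner, together with Aubin's and Lohkamp's results) guaranteeing a metric with prescribed scalar curvature that can moreover be arranged conformal to a negative-Ricci metric, and Proposition 4, whose proof transports the conformal-Killing-field obstruction through the diffeomorphism and the conformal factor in (\ref{generic2}) back to the negative-Ricci metric $g$. Within the assembly itself the one point needing care is the translation of condition (iii) of Theorem 1 into the statement that $h$ has no conformal Killing fields in the special case $K=\alpha h$; once this identification is made, everything else follows at once.
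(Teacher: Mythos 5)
Your proposal is correct and follows essentially the same route as the paper: construct a zero-momentum solution $(h,K)=(h,\alpha h)$ via Kazdan--Warner with the Lohkamp refinement so that $h$ has the form (\ref{generic2}), verify the three hypotheses of Theorem~1 (with condition (iii) reducing to the absence of conformal Killing fields, handled by Proposition~4), build the reference datum $\psi_0$ by choosing $N>0$ and $\beta_0$ freely, and invoke Theorem~1. The only difference is that you spell out the bookkeeping checks (that $\gamma=NK$ and that the lapse formula (\ref{lapse}) reproduces $N$) which the paper leaves implicit in its remark preceding Proposition~3.
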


\section{Final Remarks}

In this paper we have been able to show the validity of the main results
presented in \cite{Bartnik} in arbitrary dimensions ($n\geq3$). Specifically
we have shown that Wheeler's thin sandwich conjecture is true under certain
geometrical conditions in all these cases. As a novelty, we have also been
able to establish that the geometric hypotheses needed to prove this result
can always be satisfied in the case of zero momentum density, and thus that,
in these cases, there is an open subset in the space of possible initial data
for the constraint equations where the thin sandwich problem is well-posed.
These are interesting results describing the space of solutions of the
constraint equations in arbitrary dimensions. These type of results are
relevant in the study of the Cauchy problem for higher dimensional theories of
gravity, and they also give us a better understanding of the Superspace
picture for space-time in this context. In contrast to the usual approach to
the constraint equations, where the structure of the space of solutions and
its relation to properties of the evolving space-time is something we
understand quite well (see, for instance, \cite{C-B1},\cite{FMM}%
,\cite{Chrusciel}), most of these problems require further study in the
context of the Superspace picture for space-time.

Finally, it is worth to point out that even though we have not followed the
strategy presented in \cite{Giulini2} regarding the study of the TSP, the
results obtained therein suggest further research in the investigation of the
TSP in, perhaps, more physically realistic situations, since in this work
matter fields are included into the picture. Moreover, the results obtained
in  \cite{Giulini2}  offer some techniques which could complement the ones
presented here, such as the global uniqueness theorem which is presented
there (following the lines of \cite{B-O}), and a method to construct families
of reference solutions once one solutions satisfying the geometric restrictions
needed is obtained. Thus, we regard the combination of the two strategies as a
starting point for future research in this area.

\section*{Acknowledgements}

\noindent R. A. and C. R. would like to thank CNPq and CLAF for financial
support. J. L would like to thank CNPq and FUNCAP/CNPq/PRONEX for financial
support. We thank the referee for valuable comments and suggestions.

\bigskip

\end{document}